\numberwithin{equation}{section}
\theoremstyle{plain}
 \numberwithin{equation}{section} 
 \theoremstyle{plain}
 \newtheorem{theo+}           {Theorem}      [section]
 \newtheorem{prop+}  [theo+]  {Proposition}
 \newtheorem{coro+}  [theo+]  {Corollary}
 \newtheorem{lemm+}  [theo+]  {Lemma}
 \newtheorem{defi+}  [theo+]  {Definition}
 \newtheorem{conj+}  [theo+]  {Conjecture}
 \theoremstyle{definition}
 \newtheorem{rema+}  [theo+]  {Remark}
 \newtheorem{prob+}  [theo+]  {Problem}
 \newtheorem{exam+}  [theo+]  {Example}
 \newenvironment{theorem}{\begin{theo+}}{\end{theo+}}
 \newenvironment{proposition}{\begin{prop+}}{\end{prop+}}
 \newcommand{\ti}{\mathrm i}
 \newcommand{\qhyp}[5]{\,\mbox{}_{#1}\phi_{#2}\!\left[
   \genfrac{}{}{0pt}{}{#3}{#4};#5\right]}
 \newcommand{\bhyp}[5]{\,\mbox{}_{#1}\psi_{#2}\!\left[
   \genfrac{}{}{0pt}{}{#3}{#4};#5\right]}   
 \newcommand{\whyp}[3]{\,\mbox{}_{#1}W_{#2}\!\left(#3\right)}
\begin{document}

\baselineskip 18pt
\larger[2]
\title[Basic hypergeometry of supersymmetric dualities]{Basic hypergeometry of supersymmetric dualities} 
\author{Ilmar Gahramanov}
\address{Max Planck Institute for Gravitational Physics (Albert Einstein Institute), Am M\"{u}hlenberg 1, D-14476 Potsdam, Germany $\&$ Institut f\"{u}r Physik und IRIS Adlershof, Humboldt-Universit\"{a}t zu Berlin,
Zum Grossen Windkanal 6, D12489 Berlin, Germany $\&$ Institute of Radiation Problems ANAS, B.Vahabzade 9, AZ1143 Baku, Azerbaijan $\&$ School of Engineering and Applied Science, Khazar University, Mehseti St. 41, AZ1096, Baku, Azerbaijan}
\email{ilmar.gahramanov@aei.mpg.de}
\author{Hjalmar Rosengren}
\address
{Department of Mathematical Sciences
\\ Chalmers University of Technology and University of Gothenburg\\SE-412~96 G\"oteborg, Sweden}
\email{hjalmar@chalmers.se}
\urladdr{http://www.math.chalmers.se/{\textasciitilde}hjalmar}

\keywords{Basic hypergeometric function, $q$-hypergeometric function, $q$-hypergeometric integral, superconformal index, supersymmetric duality, mirror symmetry, Seiberg duality, integral pentagon identity}
\subjclass{Primary 81T60, 33D60; Secondary 33E20, 33D90}

\begin{abstract}

We introduce several new identities combining  basic hypergeometric sums and integrals. Such identities appear in the context of superconformal index computations for three-dimensional supersymmetric dual theories. We give both analytic proofs and physical interpretations of the presented  identities. 

\end{abstract}

\maketitle        


\section{Introduction}

Recently, there has been renewed interest in basic hypergeometric integrals because of their connection with various branches of mathematical physics, such as supersymmetric field theory, 3-manifold invariants and integrable systems.
The purpose of this paper is to state and prove new basic hypergeometric integral identities and give their physical interpretations in terms of superconformal indices. 

There is an interesting connection between partition functions of supersymmetric gauge theories on different curved manifolds and certain classes of hypergeometric functions. The first observation of this relation was made by Dolan and Osborn \cite{Dolan:2008qi}. They found that the  superconformal index  of four-dimensional $\mathcal N=1$ supersymmetric gauge theory can be written via elliptic hypergeometric integrals. Similarly,   three-dimensional superconformal indices can be expressed in terms of basic hypergeometric integrals (see e.g. \cite{Krattenthaler:2011da,Kapustin:2011jm,Gahramanov:2013rda,Gahramanov:2014ona}).

The superconformal index for a three-dimensional ${\mathcal N}=2$ supersymmetric field theory is defined as 
\begin{equation}
\text{Tr} \left[ (-1)^\text{F} e^{-\beta\{Q, Q^\dagger \}} q^{\frac12 (\Delta+j_3)}\prod_i
t_i^{F_i} \right] \;,
\end{equation}
where the trace is taken over the Hilbert space of the theory,   $Q$ and ${Q}^\dagger$ are supercharges, $\Delta$, $j_3$ are Cartan elements of the superconformal group and the fugacities $t_i$ are associated with the flavor symmetry group. 

Studying the relation between basic hypergeometric integrals and  superconformal indices is an important field of research from different points of view (see e.g. \cite{Gahramanov:2015tta}). Non-trivial mathematical identities for superconformal indices provide a very powerful tool to check known supersymmetric dualities and to establish  new ones. Such identities are also important for  better understanding the structure of the moduli of three-dimensional supersymmetric theories and supersymmetric dualities. On the other hand, there is an interesting relationship between three-dimensional ${\mathcal N}=2$ supersymmetric gauge theories and geometry of triangulated 3-manifolds. The independence of a certain topological invariant of 3-manifolds on the choice of triangulation corresponds to equality of superconformal indices of three-dimensional ${\mathcal N}=2$ supersymmetric dual theories. 

Besides their appearance in  supersymmetric field theory, basic hypergeometric integrals discussed in this paper recently appeared in the theory of exactly solvable two-dimensional statistical models \cite{Gahramanov:2015cva,Kels:2015bda}. 

In this paper we extend the results of our previous work \cite{Gahramanov:2013rda,Gahramanov:2014ona} on superconformal indices to a number of three-dimensional dualities. We provide explicit expressions  for the generalized superconformal indices of some three-dimensional ${\mathcal N}=2$ supersymmetric electrodynamics and quantum chromodynamics in terms of basic hypergeometric integrals.

We will only consider confining theories, which means that the duality leads to a closed form evaluation of a sum of  integrals (rather than a transformation between two such expressions). As an example, one of the resulting identities is
\begin{multline}\label{wp}\sum_{m=-\infty}^\infty\oint \prod_{j=1}^6\frac{(q^{1+m/2}/a_jz,q^{1-m/2}z/a_j;q)_\infty}{(q^{N_j+m/2}a_jz,q^{N_j-m/2}a_j/z;q)_\infty}\frac{(1-q^mz^2)(1-q^mz^{-2})}{q^mz^{6m}}\frac{dz}{2\pi\ti z}\\
=\frac{2}{\prod_{j=1}^6q^{\binom {N_j}2}a_j^{N_j}}\prod_{1\leq j<k\leq 6}\frac{(q/a_ja_k;q)_\infty}{(a_ja_kq^{N_j+N_k};q)_\infty} \;,
\end{multline}
where $|q|<1$, the parameters $a_j$ are generic  and $N_j$ are integers, subject to
 the balancing conditions $\prod_{i=1}^6 a_i=q$ and $N_1+\dots+N_6=0$.
Here, we use the standard notation
$$(a;q)_\infty=\prod_{j=0}^\infty(1-aq^j)\;, $$
$$(a_1,\dots,a_m;q)_\infty=(a_1;q)_\infty\dotsm(a_m;q)_\infty$$
and the integration is over a positively oriented contour separating sequences of poles going to infinity from sequences going to zero.

The organization of the paper is as follows.

\begin{itemize}
\item In Section \ref{sis} we outline the superconformal index technique for three-dimensional ${\mathcal N}=2$ supersymmetric gauge theories.

\item In Section \ref{iis} we discuss supersymmetric dualities  and present  explicit expressions of superconformal indices for certain supersymmetric dual theories in terms of basic hypergeometric integrals. 
We present four examples, each leading to an integral evaluation similar
to \eqref{wp}. Some of these evaluations generalize identities
previously obtained in \cite{Krattenthaler:2011da, Kapustin:2011jm, Gahramanov:2013rda}.

\item 
 In Section \ref{mps} we give mathematical proofs of the four integral evaluations that were derived using non-rigorous methods in Section \ref{iis}.
 This gives a consistency check of the corresponding supersymmetric dualities. 

\item We review the basic aspects of three-dimensional $\mathcal N = 2$ supersymmetric gauge theories with focus on the necessary elements for the superconformal index computations and give some details of index computation in Appendices.
\end{itemize}

\section{3d superconformal index}
\label{sis}

In this section, we recall basic facts related to the superconformal index technique. The presentation closely follows that in \cite{Imamura:2011su,Krattenthaler:2011da,Kapustin:2011jm}.

The concept of the superconformal index was first introduced for four-di\-men\-sio\-nal theories in \cite{Romelsberger:2005eg,Kinney:2005ej} and later extended to other dimensions. The superconformal index of three-dimensional ${\mathcal N}=2$ superconformal field theory is a twisted partition
function defined on $S^2 \times S^1$ as follows \cite{Bhattacharya:2008bja,Kim:2009wb,Imamura:2011su}
\begin{equation}
{I}(q,\{ t_i \})=\text{Tr} \left[ (-1)^\text{F} e^{-\beta\{Q, Q^\dagger \}} q^{\frac12 (\Delta+j_3)}\prod_i
t_i^{F_i} \right] \;,
\end{equation}
where 
\begin{itemize}

\item the trace is taken over the Hilbert space of the theory on $S^2$,

\item $\text{F}$ plays the role of the fermion number which takes value zero on bosons and one on fermions. In presence of monopoles one needs to refine this number by shifting it by $e \times m$, where $e$ and $m$ are electric charge and magnetic monopole charge, respectively. See \cite{Dimofte:2011py,Aharony:2013dha} for a discussion of this issue.

\item $\Delta$ is the energy (or conformal dimension via radial quantization), $j_3$ is the third component of the angular momentum on $S^2$,

\item $F_i$ is the charge of global symmetry with fugacity $t_i$,  

\item $Q$ is a certain supersymmetric charge in three-dimensional ${\mathcal N}=2$ superconformal algebra with quantum numbers $\Delta=\frac12$ and $j_3=-\frac12$ and $R$-charge $R=1$. The supercharges $Q^\dagger = S$ and $Q$ satisfy the  anti-commutation relation\footnote{ The full algebra
can be found in many places, see e.g., \cite{Dolan:2008vc}.}
\begin{equation}\label{qs}
\frac12 \{Q, S\}=\Delta-R-j_3 \;.
\end{equation}
\end{itemize}

Only BPS states with $\Delta-R-j_3 =0 $ contribute to the superconformal index.
Consequently,  the index is $\beta$-independent but depends non-trivially on the fugacities $t_i$ and $q$. The superconformal index counts the number of BPS states weighted by their quantum numbers.

The superconformal index can be evaluated by a path integral on $S^2 \times S^1$  via the localization technique \cite{Pestun:2007rz}, leading to the  matrix integral \cite{Kim:2009wb,Imamura:2011su} 
\begin{align} \nonumber
I(q,\{t_i\})& = \sum_{m\in \mathbb{Z}^{\text{rank}\,G}} \int \frac{1}{|W_m|}
e^{-S^{(0)}_{CS}}e^{\ti b_0} q^{\frac12 \epsilon_0} \prod_{j=1}^{\text{rank}\, F} t_{j}^{q_{0j}} \\ \label{theindex}
& \qquad \qquad \times \exp\left[\sum^\infty_{n=1}\frac{1}{n} \text{ind}(z_i^{n},
t^n, q^n; m)\right] \; \prod_{i=1}^{\text{rank}\,G} \frac{dz_i}{2 \pi \ti z_i} \;.
\end{align}
The sum in the formula is to be understood as follows. It is a sum over magnetic fluxes $m=(m_1, \ldots, m_{\text{rank}\,G})$ on the two-sphere with
\begin{equation}
m_i \ = \ \frac{1}{2 \pi} \int_{S^2} F_i \;,
\end{equation}
where $m_i$ parametrizes the GNO charge of the monopole configuration\footnote{The operators creating magnetic fluxes are not completely understood yet, for details, see e.g. \cite{Kim:2009wb}.}, in the examples we consider it runs over the integers. The prefactor $|W_m|= \prod_{i=1}^k ( \text{rank}\,G_i)!$ is the order of the Weyl group of $G$ which is ``broken'' by the monopoles into the product $G_1 \times G_2 \times \dots\times G_k$. For instance, in case of $U(N)$ gauge group $|W_m|=\prod N_k!$.

The term 
\begin{align} \nonumber
S^{(0)}_{CS} & \ = \ \frac{\ti k}{4 \pi} \int \text{tr}_{CS} (A^{(0)} dA^{(0)}
-\frac{2 \ti} {3} A^{(0)} A^{(0)} A^{(0)}) \\ \label{CScont}
& \ = \ 2 \ti \text{tr}_{CS} (g m)\;,
\end{align}
is the contribution of the Chern--Simons term if the action contains such term and 
\begin{equation} \label{oneloopcont}
b_0 \ = \ -\frac{1}{2} \; \sum_\Phi\sum_{\rho\in R_\Phi}|\rho(m)|\rho(g)
\end{equation}
is the 1-loop correction to the Chern--Simons term. The $\text{tr}_{CS}$ stands for the trace containing the Chern--Simons levels, $k$ is the Chern--Simons level and $\sum_\Phi$ and $\sum_{\rho\in R_\Phi}$ are sums over all chiral multiplets and all weights of the representation $R_\Phi$, respectively. We give the contribution (\ref{CScont}) for completeness; in all our examples we will consider theories without the Chern--Simons term\footnote{Note that even in this case the term $b_0$ is not absent since the gauge fields generate the one-loop correction to the Chern-Simons term, see Appendixe B.}.

The term $q_{0j}$ in (\ref{theindex}) is the zero-point contribution to the energy,
\begin{equation}
q_{0j}(m) = -\frac{1}{2} \sum_\Phi \sum_{\rho\in R_\Phi} |\rho(m)| f_j (\Phi) \;.
\end{equation}
In addition, there is the contribution from the Casimir energy of the vacuum state on the two-sphere with magnetic flux $m$,
\begin{equation}
\epsilon_0(m) = \frac{1}{2} \sum_\Phi (1-\Delta_\Phi) \sum_{\rho\in
R_\Phi} |\rho(m)|
- \frac{1}{2} \sum_{\alpha \in G} |\alpha(m)| \;,
\end{equation}
where $\sum_{\alpha\in G}$ represents summation over all roots of $G$, $\Delta_\Phi$ is the superconformal $R$-charge of the chiral multiplet $\Phi$. 

One can calculate the single letter index
\begin{align}
& \text{ind}(z_j=e^{ig_j},t_j,q;m_j)  = -\sum_{\alpha\in G} e^{\ti\alpha(g)} q^{\frac12|\alpha(m)|}\\ \nonumber
& \qquad + \sum_\Phi \sum_{\rho\in R_\Phi} \left[
e^{\ti\rho(g)}  \prod_{j} t_j^{f_j}
\frac{q^{\frac12 |\rho(m)|+\frac12 \Delta_\Phi}}{1-q}  -  e^{-\ti\rho(g)} 
\prod_{j} t_j^{-f_j} \frac{q^{\frac12 |\rho(m)|+1-\frac12 \Delta_\Phi}}{1-q}
\right] \;.
\end{align}
Here, the first term is the contribution of the vector multiplets and the second line is the contribution of matter multiplets, labeled by $\Phi$, where $j$ runs over the rank of the flavor symmetry group. Given the single letter index it is a combinatorial problem \cite{Benvenuti:2006qr,Feng:2007ur} to compute the full multi-letter index. The result is given by the so-called ``plethystic'' exponential
\begin{equation}
\exp \bigg( \sum_{n=1}^\infty \frac{1}{n} \text{ind} ( z^n, t^n, q^n; m) \bigg) \;.
\end{equation}
For instance, let us consider the ${\mathcal N}=2$ theory with $U(N)$ gauge group. Then, the chiral multiplet $\Phi$ with $R$-charge $r$ in the fundamental representation of the gauge group contributes to the single-letter index as
\begin{equation}
\sum_{i=1}^{N} \left[ z_i t^{f(\Phi)} \frac{q^{\frac{r}{2}+\frac{|m_i|}{2}}}{1-q}-z_i^{-1}t^{-f(\Phi)} \frac{q^{1-\frac{r}{2}+\frac{|m_i|}{2}}}{1-q} \right] \;.
\end{equation}
After the ``plethystic'' exponential one obtains the contribution of the chiral multiplet to the index
\begin{equation} \label{contmul}
\prod_{i=1}^{N} \frac{(q^{1-\frac{r}{2}+\frac{|m_i|}{2}} t^{-f(\Phi)}z_i^{-1}; q)_{\infty}}
{(q^{\frac{r}{2}+\frac{|m_i|}{2}} t^{f(\Phi)}z_i; q)_{\infty}} \;.
\end{equation}
Similarly the contribution of the vector multiplet to the single-letter index is
\begin{equation}
-\sum_{i,j=1,\ldots, N,\, i\neq j} q^{\frac14 |m_i-m_j|} \frac{z_i}{z_j} \;,
\end{equation}
and the multi-letter index gets the  form
\begin{equation}
q^{-\sum_{1\leq i<j\leq N} \frac{|m_i-m_j|}{2}} \prod_{i,j=1,\ldots, N,\, i\neq j}
\left(1-\frac{z_i}{z_j} q^{\frac{|m_i-m_j|}{2}}\right) \;.
\end{equation}

Our main interest is the so-called generalized superconformal index which includes integer parameters corresponding to global symmetries. In \cite{Kapustin:2011jm} Kapustin and Willett pointed out that one can generalize the superconformal index of three-dimensional supersymmetric gauge theory by considering the theory in a non-trivial background gauge field coupled to the global symmetries of the theory. As a result the superconformal index includes new discrete parameters for global symmetries; we do not sum over these parameters. In case of the generalized superconformal index the contribution (\ref{contmul}) has the  form 
\begin{equation}
\prod_{i=1}^{N} \frac{(q^{1-\frac{r}{2}+\frac{|m_i+ f(\Phi) n_\Phi|}{2}}
t^{-f(\Phi)}z_i^{-1}; q)_{\infty}}{(q^{\frac{r}{2}+\frac{|m_i+f(\Phi) n_\Phi|}{2}} t^{f(\Phi)}z_i; q)_{\infty}} \;,
\end{equation}
where the parameters $n_\Phi$ are new discrete variables. It is convenient to express the index as a product of contributions from chiral and vector multiplets
\begin{align} \nonumber
I(q, \{t_a\}, \{n_a\}) = \sum_{m_1,\ldots , m_{\text{rank}(G)}}\frac{1}{|W_m|} \oint \prod_{j=1}^{\text{rank}\,G} \frac{dz_j}{2 \pi \ti z_j} Z_{\text{gauge}} (z_j,m_j; q) \\  \label{gindex}
\times \prod_{\Phi} Z_{\Phi}(z_j, m_j; t_a, n_a; q) \;,
\end{align}
where 
\begin{equation}
Z_{\text{gauge}} (z_j,m_j; q) \ = \ \prod_{\alpha \in \text{ad}(G)} q^{-\frac12 |\alpha(m)|} \left( 1- e^{\alpha(g)} q^{\frac{|\alpha(m)|}{2}} \right)
\end{equation}
and 
\begin{align} \nonumber
Z_\Phi \ = \ \prod_{\rho \in R_\Phi} \left( q^{\frac{1-r_\phi}{2}} \prod_{j} e^{-\ti \rho(g)} t(\Phi)^{-f(\Phi)} \right)^{\frac12 |\rho(m)+f(\Phi) n(\Phi)|} \\
\times \frac{(e^{-\ti\rho(g)} t(\Phi)^{-f(\Phi)} q^{\frac12 |\rho(m)+f(\Phi) n(\Phi)|+\frac{1-r_\Phi}{2}}; q)_\infty}{(e^{\ti\rho(g)} t(\Phi)^{f(\Phi)} q^{\frac12 |\rho(m)+f(\Phi) n(\Phi)|+\frac{r_\Phi}{2}}; q)_\infty} \;.
\end{align}
Here $\text{ad}(G)$ stands for the adjoint representation of the gauge group $G$. Note that we do not write the contribution of the Chern--Simons term in (\ref{gindex}), since as we mentioned before we consider theories without this term. 

It is worth to mention that the three-dimensional superconformal index can be constructed from the so-called holomorphic blocks \cite{Beem:2012mb} due to its factorization property \cite{Krattenthaler:2011da,Pasquetti:2011fj,Taki:2013opa,Nieri:2013yra,Hwang:2012jh,Hwang:2015wna}, i.e. the superconformal index can be expressed in terms of two identical $3d$ holomorphic blocks ${\mathcal B}(x;q)$
as\footnote{Geometrically it means that the index can be obtained by gluing two solid tori. In this context $B_c(x;q)$ are partition functions on solid tori.}
\begin{equation}
\sum_c {\mathcal B}_c(x;q) {\mathcal B}_c(\tilde{x};\tilde{q}) \;.
\end{equation}
It is possible to obtain the factorized superconformal index directly from the localization technique via the so-called Higgs branch localization \cite{Fujitsuka:2013fga,Benini:2013yva}.

\section{Integral identities from 3d dualities}
\label{iis}

In \cite{Seiberg:1994pq} Seiberg found that there exist pairs of different four-dimensional ${\mathcal N}=1$ supersymmetric gauge theories which describe the same physics in the infrared limit. This is called supersymmetric duality. Since its proposal a large number of dualities in various dimensions have been found. 

In this section, we study three-dimensional ${\mathcal N} = 2$  supersymmetric dualities \cite{Intriligator:1996ex,Aharony:1997gp, Karch:1997ux, deBoer:1997ka} and demonstrate the matching of the superconformal index for dual theories. The superconformal index technique is one of the main tools for establishing and checking supersymmetric dualities.

In this work, we consider only confining theories, i.e.\ theories whose infrared limit can be described in terms of gauge invariant composites (mesons and baryons) and without dual quarks. There are definitely more confining supersymmetric theories in three dimensions (for recent discussions, see \cite{Csaki:2014cwa, Amariti:2015kha}). We restrict our attention to samples of theories with $U(1)$ (supersymmetric quantum electrodynamics) and $SU(2)$ (supersymmetric quantum chromodynamics) gauge symmetry. We also limit ourselves to the cases of vanishing Chern-Simons term; however, one can add such a term to the action of the theories considered in the paper.

Note that similar results for ${\mathcal N}=1$ supersymmetric gauge theories in four dimensions were intensively studied in \cite{Dolan:2008qi,Spiridonov:2009za,Spiridonov:2011hf}. All $3d$ dualities considered in the next section can be obtained via dimensional reduction from $4d$ dualities. However, obtaining the right duality in three dimensions is more tricky (for details see \cite{Aharony:2013dha,Niarchos:2012ah}). The main issue is that the reduction procedure and renormalization group flow from ultraviolet to infrared do not commute with each other. This happens because of an anomalous $U(1)$ symmetry in $4d$, which  one needs to break  in $3d$ theory. This can be done by adding a monopole operator to the $3d$ Lagrangians. To be more precise we need to add the effective superpotential $W=\eta X$ to the Lagrangian of electric theory and $W=\tilde{\eta} \tilde{X}$ to the magnetic theory (dual theory), where $X$ is a monopole operator and $\eta$ is the $4d$ instanton factor.

In our examples we give only the necessary input to compute the superconformal index and do not discuss other aspects of dual theories. As for many other dualities in physics, systematic proofs of supersymmetric dualities are absent and the superconformal index computations do not constitute a proof of the duality. There are other important arguments for three-dimensional supersymmetric dualities, i.e. study of superpotentials for interactions among chiral multiplets \cite{Aharony:2013dha}, brane construction (see e.g, \cite{deBoer:1997ka, Amariti:2015yea}), contact terms (see e.g., \cite{Closset:2012vp,Amariti:2014lla}) and other powerful methods very much in the spirit of the superconformal index such as study of sphere partition functions \cite{Kapustin:2009kz,Jafferis:2010un}, ellipsoid partition functions \cite{Dolan:2011rp,Niarchos:2012ah,Gahramanov:gka}, lens partition functions \cite{Imamura:2012rq, GahKels}, etc.

The 't Hooft anomaly matching conditions which played a central role in checking Seiberg dualities for ${\mathcal N} = 1$ supersymmetric gauge theories become  useless in three dimensions since, unlike four-dimensional gauge theories, in three dimensions there are no chiral anomalies. In three dimensions it is possible to have a classical Chern-Simons term which breaks parity. One can then use the matching condition for the parity anomaly; however, conditions for discrete anomalies are weaker than those for continuous anomalies. 

In what follows, we omit the $R$-charges for chiral multiplets, since the superconformal indices of dual theories match for arbitrary assignment of the $R$-charge \cite{Imamura:2011su}. The correct $R$-charges for matter fields in the infrared fixed points can be obtained by the so-called $Z$-extremization procedure \cite{Jafferis:2010un}.

As a final remark, let us comment that the matching of superconformal indices for dual pairs were studied mainly by expanding in terms of fugacities \cite{Imamura:2011su, Kapustin:2011vz, Kim:2013cma, Park:2013wta} and only in a few works \cite{Krattenthaler:2011da,Kapustin:2011jm,Gahramanov:2013rda} authors give rigorous proofs of the index identities. 

Below we give explicit expressions of generalized superconformal indices for some theories.  Equality of indices for dual theories 
leads to integral evaluations, which will be proved rigorously  in Section \ref{mps}.

\medskip

{\textbf{Example 1.}}

\medskip

We first consider a \textbf{Theory A} and its low-energy description \textbf{Theory B} which can be described purely in terms of composite gauge singlets. 

\medskip

\begin{itemize}

\item \textbf{Theory A}: Supersymmetric Quantum Chromodynamics with $SU(2)$ gauge group and with $SU(6)$ flavor group, chiral multiplets in the fundamental representation of the gauge group and the flavor group, a vector multiplet in the adjoint representation of the gauge group. Note that in case of $SU(2)$ gauge theories the fundamental and antifundamental representations are equivalent, therefore we have $SU(6)$ flavor group rather than $SU(3)\times SU(3) \times U(1)$.

\medskip

\item \textbf{Theory B}: no gauge symmetry, fifteen chiral multiplets in the totally antisymmetric tensor representation of the flavor group.

\end{itemize}

\medskip

This duality was considered in \cite{Teschner:2012em} where the authors presented the sphere partition functions for dual theories. It is analogous to the four-dimensional duality for similar theories \cite{Dolan:2008qi} and can be obtained by dimensional reduction.

Using the group-theoretical data (see Appendix B) it is straightforward to compute explicitly the generalized superconformal indices, and due to the supersymmetric duality we find the  basic hypergeometric integral identity
 
\begin{align} \nonumber
& \sum_{m\in\mathbb{Z}}\oint \frac{dz}{4 \pi \ti z} q^{-|m|} (1-q^{|m|} z^2)(1-q^{|m|} z^{-2})  \; (-q^{\frac 12})^{\sum_{i=1}^6 (\frac{|n_i+m|}{2}+\frac{| n_i-m|}{2})} \\ \nonumber
& \qquad \times z^{-\sum_{i=1}^6 (\frac{|n_i+m|}{2}-\frac{|n_i-m|}{2})} 
 \prod_{j=1}^6 a_j^{-\frac{|n_j+m|}{2}-\frac{|n_j-m|}{2}}
\frac{(q^{1+\frac{|n_j+m|}{2} }/{a_jz},q^{1+\frac{|n_j-m|}{2} }{z}/{a_j};q)_\infty}
{(q^{\frac{|n_j+m|}{2} }a_jz,q^{\frac{|n_j-m|}{2} }{a_j}/{z};q)_\infty}
 \\  \label{iden1}
& \qquad \qquad =(- q^{\frac12})^{\sum_{1\leq j<k\leq 6} \frac{|n_j+n_k|}{2}}
\prod_{1\leq j<k\leq 6} (a_j a_k)^{-\frac{|n_j+n_k|}{2}} \frac{(q^{1+\frac{|n_j+n_k|}{2}}a_j^{-1}a_k^{-1};q)_\infty}
{(q^{\frac{|n_j+n_k|}{2}}a_ja_k;q)_\infty}
\end{align}
with the balancing conditions 
\begin{equation} \label{BC6}
\prod_{i=1}^6 a_i=q \;\; \text{and} \;\; \sum_{i=1}^6 n_i=0 \;.
\end{equation} 

This identity describes confinement without breaking of the ``chiral symmetry''. The left side of the expression (\ref{iden1}) includes the contributions of twelve chirals and a vector multiplet, while the right hand side contains the contribution of fifteen chirals. From the fact that all physical degrees of freedom of Theory B are gauge invariant there is no integration on the right hand side. 

It is worth mentioning that the duality considered in the example is a special case of the duality claimed in \cite{Aharony:2013dha}, where the theory A is the three-dimensional $SP(2N)$ SQCD with $2N_f$ fundamentals and theory B is the $SP(2N_f-2N-4)$ theory with $2N_f$ fundamentals. Such duality is qualitatively similar to $SU(N)$ duality with matter in the fundamental representation of the gauge group. In case of $N=2$ one can consider the theory A as $SU(2)$ gauge theory since $SP(2) \simeq SU(2)$.

Note that the balancing conditions are imposed by the effective superpotential and the theories described above are dual only in the presence of certain superpotentials. We refer the interested reader to \cite{Aharony:2013dha} for more details related to the study of superpotentials for three-dimensional dualities.

In (\ref{iden1}) we used the absolute values of monopole charges as in the definition of the superconformal index. 
It is possible to eliminate all absolute values using the elementary
identity  \cite{Dimofte:2011py}

\begin{equation} \label{absolute}
\frac{(q^{1+|m|/2}/z;q)_\infty}{(q^{|m|/2}z;q)_\infty}=(-q^{-\frac 12}z)^{\frac{|m|-m}2}\frac{(q^{1+m/2}/z;q)_\infty}{(q^{m/2}z;q)_\infty}.
\end{equation}

After such simplification, \eqref{iden1} takes the form

\begin{multline*}\sum_{m=\in\mathbb Z}^\infty\oint \prod_{j=1}^6\frac{(q^{1+(m+n_j)/2}/a_jz,q^{1+(n_j-m)/2}z/a_j;q)_\infty}{(q^{(n_j+m)/2}a_jz,q^{(n_j-m)/2}a_j/z;q)_\infty}\frac{(1-q^mz^2)(1-q^mz^{-2})}{q^mz^{6m}}\frac{dz}{4\pi\ti z}\\
=\frac{1}{\prod_{j=1}^6a_j^{n_j}}\prod_{1\leq j<k\leq 6}\frac{(q^{1+(n_j+n_k)/2}/a_ja_k;q)_\infty}{(q^{(n_j+n_k)/2}a_ja_k;q)_\infty} \;.\end{multline*}

After replacing $(a_j,n_j)\mapsto(a_jq^{N_j/2},N_j)$, this is \eqref{wp}.
We give a rigorous mathematical proof of this identity in
Theorem \ref{awt}.

The most intriguing physical interpretation of the formula (\ref{iden1}) stems from the role it plays as a star-triangle relation \cite{Kels:2015bda, Gahramanov:2015cva} for a certain two-dimensional statistical model. 

The integral identity (\ref{iden1}) can be obtained by reduction \cite{Yamazaki:2013fva, Benini:2011nc, Kels:2015bda} from the similar  identity for four-dimensional lens indices. In \cite{Kels:2015bda} such reduction was made in the context of integrable statistical models.  

The $q \rightarrow 1$ limit of (\ref{iden1}) was discussed in \cite{Kels:2015bda}. This limit also has an interpretation in terms of exactly solvable statistical models \cite{Kels:2013ola}.  From the viewpoint of supersymmetric dualities such reduction \cite{Benini:2012ui,Benini:2011nc} gives the equality of the sphere partition functions of dual two-dimensional ${\mathcal N}=(2,2)$ supersymmetric gauge theories.

\medskip

{\textbf{Example 2.}}

\medskip

Our second example is again a supersymmetric quantum chromodynamics with a weakly coupled magnetic dual.

\medskip

\begin{itemize}

\item \textbf{Theory A}: Supersymmetric Quantum Chromodynamics with $SU(2)$ gauge group and four flavors, chiral multiplets in the fundamental representation of the gauge group and the flavor group, the vector multiplet in the adjoint representation of the gauge group.

\medskip 

\item \textbf{Theory B}: no gauge degrees of freedom, with six mesons and a singlet chiral field.

\medskip

\end{itemize}

\medskip

According to the supersymmetric duality we have the following integral
identity for the generalized superconformal indices:

\begin{align} \nonumber
& \sum_{m\in\mathbb{Z}}\oint \frac{dz}{4 \pi \ti z}\,q^{-|m|} (1-q^{|m|} z^2)(1-q^{|m|} z^{-2})  \; ({ -}q^{\frac 12})^{\sum_{i=1}^4 (\frac{|n_i+m|}{2}+\frac{|n_i-m|}{2}-{ n_i})} \\ \nonumber
& \qquad \times  z^{-\sum_{i=1}^4 (\frac{|n_i+m|}{2}-\frac{|n_i-m|}{2})} 
\prod_{j=1}^4 a_j^{-\frac{|n_j+m|}{2} { -}\frac{|n_j-m|}{2}{ +n_j}}
\frac{(q^{1+\frac{|n_j+m|}{2} }/{a_jz},q^{1+\frac{|n_j-m|}{2} }{z}/{a_j};q)_\infty}
{(q^{\frac{|n_j+m|}{2} }a_jz,q^{\frac{|n_j-m|}{2} }{a_j}/{z};q)_\infty}
 \\ \nonumber
& \quad = (-q^{\frac 12})^{\sum_{1\leq j<k\leq 4} \frac{|n_j+n_k|}{2} -\sum_{i=1}^4 n_i-\frac {| \sum_{i=1}^4 n_i|}{2}} (a_1 a_2 a_3 a_4)^{\frac{ |\sum_{i=1}^4  n_i| -\sum_{i=1}^4 n_i}{2}}  \\ \label{iden2}
& \times \frac{( q^{\frac{|\sum_{i=1}^4 n_i|}{2}}a_1 a_2 a_3 a_4)_{\infty}}{(q^{1+ \frac{|\sum_{i=1}^4 n_i|}{2}}/a_1 a_2 a_3 a_4)_{\infty}} \prod_{1\leq j<k\leq 4} (a_j a_k)^{-\frac{|n_j+n_k| +(n_j+n_k)}{2}} \frac{(q^{1+\frac{|n_j+n_k|}{2}}/a_ja_k;q)_\infty}
{(q^{\frac{|n_j+n_k|}{2}}a_ja_k;q)_\infty}.
\end{align}

The ordinary index of the theory A was considered in \cite{Gahramanov:2013xsa} in the context of global symmetry enhancement. It was shown that the superconformal index of the theory has an extended $SO(10)$ flavor symmetry when coupled to $4d$ multiplets with specific boundary conditions.

Note that one can deform dual theories from Example 1 by adding mass terms for some of the quarks. After integrating out one flavor (massive modes) the theory with the remaining four flavors confines with ``chiral symmetry breaking'' \cite{Gahramanov:2015cva} if we keep a certain superpotential for the theory giving the balancing conditions similar to (\ref{BC6}). Here the theory A has no superpotential and therefore we obtain the duality ({\ref{iden2}}).

Eliminating the absolute values as before, \eqref{iden2} can be expressed as
\begin{align} \nonumber
& \sum_{m\in\mathbb{Z}}\oint \frac{dz}{4 \pi \ti z}\frac{(1-q^{m} z^2)(1-q^{m} z^{-2})}{q^m     z^{4m} }
\prod_{j=1}^4 
\frac{(q^{1+\frac{n_j+m}{2} }/{a_jz},q^{1+\frac{n_j-m}{2} }{z}/{a_j};q)_\infty}
{(q^{\frac{n_j+m}{2} }a_jz,q^{\frac{n_j-m}{2} }{a_j}/{z};q)_\infty}
  \\ \label{iden2b}
&\quad= \frac{( q^{\frac{\sum_{i=1}^4 n_i}{2}}a_1 a_2 a_3 a_4)_{\infty}}{(q^{1+ \frac{\sum_{i=1}^4 n_i}{2}}/a_1 a_2 a_3 a_4)_{\infty}} \prod_{1\leq j<k\leq 4}  \frac{(q^{1+\frac{n_j+n_k}{2}}/a_ja_k;q)_\infty}
{(q^{\frac{n_j+n_k}{2}}a_ja_k;q)_\infty}.
\end{align}

This can be recognized as a special case of Proposition \ref{wps}. More precisely, Proposition \ref{wps} states that \eqref{iden2b} holds
even with the integers $n_j$ replaced by generic complex parameters.

\medskip

In contrast to four dimensions, there exist supersymmetric dualities for abelian gauge theories in three dimensions. For details of such dualities see e.g.\ \cite{Strassler:1999hy}.  Below we consider two examples of such dualities.

\medskip

{\textbf{Example 3.}}

\medskip

\begin{itemize}

\item \textbf{Theory A}: $d=3$ ${\mathcal N}=2$ supersymmetric electrodynamics with $U(1)$ gauge symmetry and six chiral multiplets, half of them transforming in the fundamental representation of the gauge group and another half transforming in the anti-fundamental representation.

\medskip
 
\item \textbf{Theory B}: no gauge degrees of freedom, nine gauge invariant ``mesons'' transforming in the fundamental representation of the flavor group.

\end{itemize}

\medskip

Supersymmetric duality leads to the following identity for the generalized super\-conformal indices: 

\begin{align} \nonumber
& \sum_{m \in \mathbb Z}  \oint \frac{dz}{2\pi \ti z}  (-q^{\frac 12})^{ \sum_{i=1}^3 (\frac{|m_i+m|}{2}+\frac{|n_i-m|}{2}) } z^{- \sum_{i=1}^3 (\frac{|m_i+m|}{2}-\frac{|n_i-m|}{2})}\\ \nonumber
& \qquad \qquad \times  \prod_{i=1}^3  a_i^{-\frac{|m_i+m|}{2}} b_i^{-\frac{|n_i-m|}{2}} \frac{(q^{1+\frac{|m_i+m|}{2}}/a_i z, q^{1+\frac{|n_i-m|}{2}} z/b_i ;q)_{\infty}}{(q^{\frac{|m_i+m|}{2}} a_i z ,q^{\frac{|n_i-m|}{2}} b_i/z ;q)_{\infty}} \\ \label{prepentagon}
& \qquad \qquad  = (-q^{\frac 12})^{ \sum_{i,j=1}^3\frac{|m_i+n_j|}{2}} \prod_{i,j=1}^{3} (a_ib_j)^{-\frac{|m_i+n_j|}{2}} \frac{(q^{1+\frac{|m_i+n_j|}{2}}/ a_i b_j;q)_{\infty}}{(q^{\frac{|m_i+n_j|}{2}} a_i b_j ;q)_{\infty}}   \;,
\end{align}
where the fugacities $a_i$ and $b_i$ stand for the flavor symmetry $SU(3)\times SU(3)$, $z$ is the fugacity for the $U(1)$ gauge group and the balancing conditions are 
\begin{align} \label{balcond}
\prod_{i=1}^3 a_i  = \prod_{i=1}^3  b_i =q^{\frac12} \;\; \text{and} \;\; \sum_{i=1}^3 n_i  =\sum_{i=1}^3 m_i=0 \;.
\end{align}

Eliminating the absolute values, this identity takes the form
\begin{multline*}\sum_{m=-\infty}^\infty\oint \prod_{i=1}^3\frac{(q^{1+(m+m_i)/2}/a_iz,q^{1+(n_i-m)/2}z/b_i;q)_\infty}{(q^{(m+m_i)/2}a_iz,q^{(n_i-m)/2}b_j/z;q)_\infty}\frac{1}{z^{3m}}\frac{dz}{2\pi\ti z}\\
=\frac{1}{\prod_{i=1}^3a_i^{m_i}b_i^{n_i}}\prod_{i,j=1}^3\frac{(q^{1+(m_i+n_j)/2}/a_ib_j;q)_\infty}{(q^{(m_i+n_j)/2}a_ib_j;q)_\infty}\;
. \end{multline*}

After the change of variables $z\mapsto -z$, one may check that this is equivalent to Theorem \ref{t} below. This identity was first 
proved in
 \cite{Gahramanov:2013rda} in the special case of
 ordinary superconformal indices\footnote{Note that the identity of sphere partition functions for this duality was presented in \cite{Kashaev:2012cz,Benvenuti:2016wet}.}, that is, $m_i\equiv n_i\equiv 0$.
The general case was presented without proof in \cite{Gahramanov:2014ona} .

The expression (\ref{prepentagon}) can be written as an integral pentagon identity. Following \cite{Gahramanov:2013rda}, we introduce the function

\begin{align} \nonumber
{\mathcal B}_m[a, n;b, m] & = (-q^{\frac 12})^{\frac{|n|}{2}+\frac{|m|}{2}-\frac{|n+m|}{2}} a^{-\frac{|n|}{2}} b^{-\frac{|m|}{2}} (ab)^{\frac{|n+m|}{2}} \\ 
& \quad \times \frac{(q^{1+\frac{|n|}{2}}a^{-1},q^{1+\frac{|m|}{2}}b^{-1},q^{\frac{|n+m|}{2}}ab;q)_\infty}{(q^{\frac{|n|}{2}}a,q^{\frac{|m|}{2}}b,q^{1+\frac{|n+m|}{2}}(ab)^{-1};q)_\infty}\; ,
\end{align}
and rewrite the equality (\ref{prepentagon}) in terms of this function. We obtain the following integral pentagon identity in terms of ${\mathcal B}$ functions:
\begin{align} \nonumber
 \sum_{m\in\mathbb Z} \oint \frac{d z}{2 \pi \ti z}   \prod_{i=1}^3 {\mathcal B}[ a_i z, n_i+m; b_i z^{-1}, m_i-m] \\ \label{pentagon}
 = {\mathcal B}[a_1 b_2, n_1+m_2; a_3 b_1; n_3+m_1] \; {\mathcal B}[a_2 b_1, n_2+m_1; a_3 b_2, n_3+m_2] \;,
\end{align}
with the balancing conditions (\ref{balcond}). 

The integral identity (\ref{pentagon}) is interesting from the following point of view. There is a recently proposed relation called $3d/3d$ correspondence between $3d$ ${\mathcal N}=2$ supersymmetric gauge theories and 3-manifolds \cite{Terashima:2011qi, Dimofte:2011ju} (see also \cite{Dimofte:2014ija,Dimofte:2011py} and earlier works \cite{Terashima:2011xe, Terashima:2012cx}) in similar spirit as the AGT correspondence \cite{Alday:2009aq}. This correspondence translates the ideal triangulation of the 3-manifold into mirror symmetry for three-dimensional supersymmetric theories. The independence of the corresponding 3-manifold invariant on the choice of triangulation corresponds to the equality of superconformal indices of mirror dual theories \cite{Dimofte:2011py}. In this context the identity (\ref{pentagon}) encodes a 3--2 Pachner move for 3-manifolds. 

\medskip

{\textbf{Example 4.}}

\medskip

Let us consider another example of abelian duality, namely the well-known XYZ/SQED mirror symmetry \cite{Aharony:1997bx,deBoer:1997ka,Intriligator:1996ex}.

\medskip

\begin{itemize}

\item \textbf{Theory A:} ${\mathcal N}=2$ supersymmetric quantum electrodynamics,  with a single $U(1)$ vector multiplet and two chiral multiplets charged oppositely under the gauge group.

\medskip

\item \textbf{Theory B:} free Wess--Zumino theory with three chiral multiplets. This theory is often is called the XYZ model in the literature.

\medskip

\end{itemize} 

In this example we wish to turn on the contribution to the generalized superconformal index of the topological symmetry $U(1)_J$, which is not explicit in the Lagrangian. This hidden symmetry is generated by the current
\begin{equation}
J^\mu \ = \  \varepsilon^{\mu \nu \rho} F_{\nu \rho} \;.
\end{equation}
The current $J^{\mu}$ is topologically conserved\footnote{The corresponding charge is carried by the Abrikosov-Nielsen-Olesen vortices in the Higgs branch of $\mathcal N=2$ theory.} due to the Bianchi identity. 

In this case we have a special duality called mirror symmetry which exchanges the Coulomb branch of a theory with the Higgs branch of its mirror dual and vice versa. The duality implies the identity
\begin{align} \nonumber
& \sum_{s \in \mathbb{Z}} \oint \frac{dz}{2 \pi \ti z} (-1)^{s+m+\frac{|s+ m|}{2}+\frac{|s - m|}{2}} z^{n} w^{s} (q^{\frac14} z \alpha^{-1})^{\frac{|s - m|}{2}} (q^{\frac14} z^{- 1} \alpha^{-1})^{\frac{|s+ m|}{2}} \\
\nonumber&\qquad\qquad \quad\times\frac{(z^{\pm } \alpha^{-1} q^{\frac{|s \mp m|}{2}+\frac34};q)_\infty}{(z^{\pm } \alpha q^{\frac{|s \pm m|}{2}+\frac14};q)_\infty} \\ \nonumber
& \qquad \qquad = (-1)^{n+m+\frac{|n+ m|}{2}+\frac{|n- m|}{2}} (q^{\frac14} \alpha w)^{\frac{|m - n|}{2}} (q^{\frac14} \alpha w^{- 1})^{\frac{|m+ n|}{2}} \alpha^{-2|m|} \\ \label{mirror2}
& \qquad \qquad \quad  \times \frac{(\alpha w^{\pm } q^{\frac{|m \pm n|}{2}+\frac34},\alpha^{-2} q^{|m|+\frac12};q)_\infty}{(\alpha^{-1} w^{\pm } q^{\frac{|m \mp n|}{2}+\frac14},\alpha^2 q^{|{m}|+\frac12};q)_\infty} \;,
\end{align}
where the fugacity $\alpha$ and the monopole charge $m$ denote the parameters for the axial $U(1)_A$ symmetry, $\omega$ and $n$ denote the parameters for the topological $U(1)_J$ symmetry and the discrete parameter $s$ stands for the magnetic charge corresponding to the $U(1)$ gauge group. 
The factors containing $\pm$ should be interpreted as the product over both choices; for instance,
$$(z^{\pm } \alpha^{-1} q^{\frac{|s \mp m|}{2}+\frac34};q)_\infty=(z\alpha^{-1} q^{\frac{|s- m|}{2}+\frac34};q)_\infty(z^{-1} \alpha^{-1} q^{\frac{|s+ m|}{2}+\frac34};q)_\infty. $$

Here, we explicitly write the $R$-charges of chiral multiplets. Due to the  permutation symmetry of the superpotential $W=\tilde{q} S q$ for the theory B, where  $q,\tilde{q}$, $S$ are three chiral multiplets of the theory, one can fix the $R$-charges\footnote{ In the infrared limit the superpotential $W$ must have the R-charge $2$, then the $R$-charge of chiral multiplets of theory B must be $\frac23$. }.

The case $m=n=0$ of (\ref{mirror2}) was presented in \cite{Imamura:2011su, Krattenthaler:2011da} and proven in \cite{Krattenthaler:2011da}.
The general case was presented, with a slight mistake\footnote{{ We have an additional phase factor $(-1)^{s+m+\frac{|s + m|}{2}+\frac{|s - m|}{2}}$, which is due to the definition of the fermion number operator $F$ in the definition of the superconformal index \cite{Dimofte:2011py} (see also \cite{Aharony:2013dha,Hwang:2015wna}). In fact, in general the superconformal indices match for dual theories in presence of this corrected phase factor \cite{Aharony:2013dha}. Without the phase factor the identity presented by Kapustin and Willett \cite{Kapustin:2011jm} is incorrect. It is actually a good example where the naive choice of the fermion number as $2J_3$ does not work.}}, in \cite{Kapustin:2011jm}, where 
a proof was given for the special case $m=0$. In Section \ref{mps} we give an analytic proof of the general case. More precisely, eliminating the absolute values in \eqref{mirror2} gives
\begin{align} \nonumber
& \sum_{s \in \mathbb{Z}} \oint \frac{dz}{2 \pi \ti z} (-w)^s z^{n-s}
\frac{(z^{\pm } \alpha^{-1} q^{\frac{m\mp s}{2}+\frac34};q)_\infty}{(z^{\pm } \alpha q^{\frac{m\pm s}{2}+\frac14};q)_\infty} \\ \nonumber
& \qquad \qquad = (-w)^n\frac{(\alpha w^{\pm } q^{\frac{m \pm n}{2}+\frac34},\alpha^{-2} q^{m+\frac12};q)_\infty}{(\alpha^{-1} w^{\pm } q^{\frac{m \pm n}{2}+\frac14},\alpha^2 q^{{m}+\frac12};q)_\infty} \;,
\end{align}
which can be recognized as the special case $a=b=q^{\frac 14-\frac m2}\alpha$, $c=d=q^{\frac 14+\frac m2}\alpha$ of Proposition~\ref{ramp}.

The identity \eqref{mirror2} and related identities can also be written as pentagon identities.
In fact, introducing the tetrahedron index \cite{Dimofte:2011py, Dimofte:2011ju}
$$\mathcal I_q[m,z] = \frac{(q^{1-\frac m2}/z;q)_\infty}{(q^{-\frac m2}z;q)_\infty},$$
 it follows from Proposition \ref{ramp} that 
 \begin{multline*}\sum_{s\in\mathbb Z}\oint(-w)^sz^{N-s}\,\mathcal I_q[m-s;q^{1/4}\alpha z]\,
\mathcal I_q[n+s;q^{1/4}\beta/z]\,\frac{dz}{2\pi\ti z}\\
=(-w)^N\mathcal I_q[m+n;q^{\frac 12}\alpha\beta]\,
\mathcal I_q[n+N;q^{1/4}w/\beta]\,\mathcal I_q[m-N;q^{1/4}/\alpha w].
\end{multline*}
Special cases with $m=n=N=0$ and $m=n$ (corresponding to \eqref{mirror2}) were presented earlier in \cite{Gahramanov:2013rda},
   \cite{Gahramanov:2014ona}, respectively.

One can also consider this duality as a mirror symmetry between ${\mathcal N}=4$ super\-symmetric electrodynamics with a single flavor and its dual theory with a free hypermultiplet. Then we obtain instead of (\ref{mirror2}) the mathematically equivalent identity 
\begin{align} \nonumber
& \alpha^{2|m|} \frac{(\alpha^{2} q^{|m|+\frac12};q)_\infty}{(\alpha^{-2} q^{|{m}|+\frac12};q)_\infty} \sum_{s \in \mathbb{Z}} \oint \frac{dz}{2 \pi \ti z} (-1)^{s+m+\frac{|s+ m|}{2}+\frac{|s - m|}{2}} \\ \nonumber
& \qquad \qquad \qquad \times z^{n} w^{s} (q^{\frac14} z \alpha^{-1})^{\frac{|s - m|}{2}} (q^{\frac14} z^{- 1} \alpha^{-1})^{\frac{|s+ m|}{2}}  \frac{(z^{\pm } \alpha^{-1} q^{\frac{|s \mp m|}{2}+\frac34};q)_\infty}{(z^{\pm } \alpha q^{\frac{|s \pm m|}{2}+\frac14};q)_\infty} \\ 
& = (-1)^{n+m+\frac{|n+ m|}{2}+\frac{|n- m|}{2}} (q^{\frac14} \alpha w)^{\frac{|m - n|}{2}} (q^{\frac14} \alpha w^{- 1})^{\frac{|m+ n|}{2}}   \frac{(\alpha w^{\pm } q^{\frac{|m \pm n|}{2}+\frac34} ;q)_\infty}{(\alpha^{-1} w^{\pm } q^{\frac{|m \mp n|}{2}+\frac14} ; q)_\infty} \;.
\end{align}

\section{Mathematical proofs of identities}
\label{mps}

In this section we will use the standard notation of \cite{gr}. In particular,
the basic hypergeometric series is \cite[Ex.\ (1.2.22)]{gr}
\begin{equation}
\qhyp rs{a_1,\ldots,a_r}{b_1,\ldots, b_s}{q,z} = \sum_{j=0}^{\infty} \frac{(a_1;q)_j \ldots (a_r;q)_j}{(b_1;q)_j \ldots (b_s;q)_j} \left[(-1)^{j}q^{{\binom j 2}}\right]^{1+s-r} z^j \;,
\end{equation}
and the bilateral basic hypergeometric series is \cite[Ex.\ (5.1.1)]{gr}
\begin{equation}
\bhyp rs{a_1,\ldots,a_r}{b_1,\ldots, b_s}{q,z} = \sum_{j=0}^{\infty} \frac{(a_1;q)_j \ldots (a_r;q)_j}{(b_1;q)_j \ldots (b_s;q)_j} (-1)^{(s-r)j}q^{(s-r){\binom j 2}} z^j \;,
\end{equation}
where
$$(a;q)_n=\prod_{j=0}^n (1-aq^j)\;. $$
The very-well-poised basic hypergeomeric series is defined as \cite[Ex.\ (2.1.11)]{gr}
\begin{equation}
\whyp {r+1}r{a_1, a_4, a_5, \ldots,a_{r+1};q,z} = \qhyp {r+1}r{a_1,qa_1^\frac12,-qa_1^\frac12,a_4,\ldots,a_{r+1}}{a_1^\frac12, -a_1^\frac12, q a_1/a_4,\ldots, qa_1/a_{r+1}}{q,z} \;.
\end{equation}

We will assume that $|q|<1$. We will also write
$$\theta(z;q)=(z,q/z;q)_\infty\;.$$
This theta function satisfies the quasi-periodicity
\begin{equation}\label{qp}\theta(zq^N;q)=\frac{(-1)^N}{q^{\binom N2}z^N}\theta(z;q)\;,\qquad N\in\mathbb Z\;.\end{equation}

We will formulate four fundamental  identities, which evaluate
a combination of a basic hypergeometric integral and sum. 
In each case, we assume that the parameters are generic, so that the poles of the integrand split naturally
into geometric sequences converging to $0$ and to $\infty$. The integration is over a positively oriented contour separating these two types of poles.

To prove the first identity, we use the Nasrallah--Rahman integral and the nonterminating Jackson summation, which are  top level results
for basic hypergeometric integral evaluations and summations,
respectively.
Consequently, we expect that Theorem \ref{awt}
is a top level result for evaluations of the type considered here, 
with combined integration and summation.

\begin{theorem}\label{awt}
Let $a_j$ be generic numbers and $N_j$ integers satisfying $a_1\dotsm a_6=q$
and $N_1+\dots+N_6=0$. Then,
\begin{multline}\label{awti}\sum_{m=-\infty}^\infty\oint \prod_{j=1}^6\frac{(q^{1+m/2}/a_jz,q^{1-m/2}z/a_j;q)_\infty}{(q^{N_j+m/2}a_jz,q^{N_j-m/2}a_j/z;q)_\infty}\frac{(1-q^mz^2)(1-q^mz^{-2})}{q^mz^{6m}}\frac{dz}{2\pi\ti z}\\
=\frac{2}{\prod_{j=1}^6q^{\binom {N_j}2}a_j^{N_j}}\prod_{1\leq j<k\leq 6}\frac{(q/a_ja_k;q)_\infty}{(a_ja_kq^{N_j+N_k};q)_\infty}\;. \end{multline}
\end{theorem}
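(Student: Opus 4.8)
The plan is to carry out the two integrations in the order signalled by the two named ingredients: first the continuous $z$-integral for each fixed flux $m$, by the Nasrallah--Rahman integral, and then the remaining sum over $m$, by the nonterminating Jackson summation. For the inner integral I would first symmetrise the denominator by the substitution $z\mapsto q^{-m/2}z$: the six factors $(q^{N_j+m/2}a_jz,q^{N_j-m/2}a_j/z;q)_\infty$ then become $(b_jz,b_j/z;q)_\infty$ with $b_j=q^{N_j}a_j$, and the balancing condition survives in the clean form $\prod_{j=1}^6 b_j=q$ (this is where $\sum_j N_j=0$ first enters). The price is that the numerator $\prod_j(q^{1+m/2}/a_jz,q^{1-m/2}z/a_j;q)_\infty$ and the vector-multiplet factor $(1-q^mz^2)(1-q^mz^{-2})$ acquire shifts by integer powers of $q$; these I would clear using the theta quasi-periodicity \eqref{qp}, which rewrites each shifted $q$-Pochhammer as the unshifted one times an explicit monomial in $z$ and a power of $q$. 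It is precisely here that the bookkeeping factors $q^{\binom{N_j}2}a_j^{N_j}$ and $z^{6m}$ on the two sides are generated. The integral is then in the scope of the Nasrallah--Rahman integral, whose evaluation yields the product $\prod_{1\le j<k\le6}(q/a_ja_k;q)_\infty/(a_ja_kq^{N_j+N_k};q)_\infty$, up to an explicit $m$-dependent scalar.

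Second, after the integration the identity collapses to a single sum over $m\in\mathbb Z$ of these $m$-dependent scalars. I would reorganise this into very-well-poised form: the reflection $m\mapsto-m$ (equivalently $z\mapsto1/z$ on the contour) relates the two tails of the bilateral sum, which lets me present it as a nonterminating very-well-poised ${}_8\phi_7$ series whose argument and parameters are fixed by the $a_j$ and $N_j$. The nonterminating Jackson summation then evaluates it in closed form. At this stage one must verify that its very-well-poised and balancing conditions hold (this is where $\prod_j a_j=q$ and $\sum_j N_j=0$ are used a second time) and, crucially, that the second, ``companion'' term in the nonterminating Jackson formula drops out, so that a single product remains. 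Matching that product and the accumulated prefactors against the stated right-hand side --- in particular recovering the overall factor $2$ and the factor $\prod_j q^{-\binom{N_j}2}a_j^{-N_j}$ --- completes the argument.

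Throughout I would work under the stated genericity of the $a_j$, which guarantees that the poles of the integrand split into sequences tending to $0$ and to $\infty$ so that the prescribed separating contour exists; the interchange of $\sum_m$ with $\oint$ is justified by uniform convergence on the contour for $|q|<1$, after which the term-by-term evaluation above is legitimate.

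I expect the main obstacle to be the first step: reducing the flux-shifted inner integral to a clean instance of the Nasrallah--Rahman integral. The genuine difficulty is the asymmetry between the $z$- and $z^{-1}$-parameters induced by $m\neq0$; handling it correctly through \eqref{qp}, while tracking every $m$-dependent power of $q$, of $z$, and of the $a_j$ so that they assemble into exactly the very-well-poised series that Jackson's summation can sum (and into exactly the prefactor on the right-hand side), is the delicate part of the proof.
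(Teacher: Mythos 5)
Your inventory of ingredients (Nasrallah--Rahman plus the nonterminating Jackson summation) matches the paper's, but the order in which you deploy them is wrong, and this is a genuine gap, not a bookkeeping issue. After your substitution $z\mapsto q^{-m/2}z$ the fixed-$m$ integrand becomes
\begin{equation*}
\prod_{j=1}^6\frac{(q^{1+m}/a_jz,\;q^{1-m}z/a_j;q)_\infty}{(q^{N_j}a_jz,\;q^{N_j}a_j/z;q)_\infty}\,(1-z^2)(1-q^{2m}z^{-2})\,q^{3m^2-m}z^{-6m}\;,
\end{equation*}
and for no value of $m$ (not even $m=0$) is this an instance of the Nasrallah--Rahman integral: that integral requires the full products $(z^{\pm2};q)_\infty$ in the numerator, exactly \emph{five} denominator pairs $(b_jz^{\pm};q)_\infty$, and a \emph{single} numerator pair $(Bz^{\pm};q)_\infty$ with $B=b_1\dotsm b_5$, whereas here you have six denominator pairs, only the bare factors $(1-z^2)(1-q^{2m}z^{-2})$, and six numerator pairs. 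Moreover the quasi-periodicity \eqref{qp} cannot ``clear the shifts'' as you propose: it applies to theta functions $\theta(w;q)=(w,q/w;q)_\infty$, and the pair $(q^{1+m}/a_jz;q)_\infty(q^{1-m}z/a_j;q)_\infty$ is not of that form for generic $a_j$ (one would need $a_j^2=q$); converting the shifts by finite Pochhammer factors instead leaves $m$-dependent rational functions of $z$, again outside Nasrallah--Rahman's scope. In fact the fixed-$m$ integral admits no closed product evaluation at all --- only the bilateral sum over $m$ does --- so any plan of the shape ``evaluate the inner integral per flux, then sum the resulting scalars'' must fail at its first step. Your endgame also misdescribes the second lemma: the nonterminating Jackson summation \cite[Eq.~(II.25)]{gr} is not a formula for a single very-well-poised ${}_8\phi_7$; it evaluates a combination of \emph{two} ${}_8W_7$ series, and in the correct proof both terms are essential --- nothing ``drops out''.

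The paper's proof inverts your order. After the same substitution and the interchange of $\sum_m$ and $\oint$ (which you justify correctly), the $m$-sum inside the integral is recognized as a very-well-poised bilateral ${}_8\psi_8$ with argument $q$, as in \eqref{l}. The step you are missing is Bailey's transformation \cite[Eq.~(III.38)]{gr}, which expresses this ${}_8\psi_8$ as a sum of two ${}_8W_7$ series whose entire $z$-dependence sits in factors $\theta(a_6z^{\pm};q)\,(z^{\pm 2};q)_\infty$ (plus the companion term with $a_5\leftrightarrow a_6$). It is this transformation that manufactures the $(z^{\pm2};q)_\infty$ and the single numerator theta needed for Nasrallah--Rahman; the quasi-periodicity \eqref{qp} is then applied to the genuine theta $\theta(a_6z^{\pm};q)$ (this is where the prefactors of the type $q^{\binom{N_j}2}a_j^{N_j}$ actually arise), after which each of the two $z$-integrals is exactly Nasrallah--Rahman with denominator parameters $q^{N_j}a_j$, $j=1,\dots,5$, and numerator parameter $B=q^{1-N_6}/a_6$. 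Finally the two surviving ${}_8W_7$ prefactors, which no longer involve $z$, are combined into the right-hand side of \eqref{awti} by the nonterminating Jackson summation. To repair your argument you would have to perform (a transformed version of) the $m$-summation \emph{before} the $z$-integration, which is precisely the paper's route.
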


\begin{proof}
Let $L$ denote the left-hand side of \eqref{awti}. 
Note that the poles of the integrand are situated at fixed values
of $zq^{m/2}$. Thus, we may replace
$z$ by $zq^{-m/2}$ and interchange the sum and the integral. This gives
\begin{align}\nonumber L&=\oint \prod_{j=1}^6\frac{(qz^\pm/a_j;q)_\infty}{(q^{N_j}a_jz^\pm;q)_\infty}{(1-z^2)(1-z^{-2})}\\
\label{l}&\quad\times{}_8\psi_8\left(\begin{matrix}q/z,-q/z,a_1/z,\dots,a_6/z\\ 1/z,-1/z,q/a_1z,\dots,q/a_6z\end{matrix};q,q\right)\frac{dz}{2\pi\ti z}\;.
\end{align}
By \cite[Eq.\ (III.38)]{gr}, we may write
\begin{multline}\label{pwt} \prod_{j=1}^6{(qz^\pm/a_j;q)_\infty}{(1-z^2)(1-z^{-2})}\,{}_8\psi_8\left(\begin{matrix}q/z,-q/z,a_1/z,\dots,a_6/z\\ 1/z,-1/z,q/a_1z,\dots,q/a_6z\end{matrix};q,q\right)\\
=\frac{(q;q)_\infty\prod_{j=1}^4(qa_5^{\pm}/a_j;q)_\infty\theta(a_6z^{\pm};q)(z^{\pm 2};q)_\infty}{(qa_5^2,a_6a_5^{\pm};q)_\infty}\\
\times{}_8W_7(a_5^2;a_5a_1,a_5a_2,a_5a_3,a_5a_4,a_5a_6;q,q)
+\operatorname{idem}(a_5;a_6)\;,\end{multline}
where the second term means that the first term is repeated with $a_5$ and $a_6$ interchanged. Using \eqref{qp} to write
$$\theta(a_6z^{\pm};q)_\infty=q^{2\binom{N_6}2}a_6^{2N_6}\theta(a_6q^{N_6}z^{\pm};q)_\infty\;, $$
this leads to
\begin{align*}
L&=q^{2\binom{N_6}2}a_6^{2N_6}\frac{(q;q)_\infty\prod_{j=1}^4(qa_5^{\pm}/a_j;q)_\infty}{(qa_5^2,a_6a_5^{\pm};q)_\infty}\,{}_8W_7(a_5^2;a_5a_1,a_5a_2,a_5a_3,a_5a_4,a_5a_6;q,q)\\
&\quad\times\oint  \frac{(z^{\pm 2},q^{1-N_6}a_6^{-1}z^\pm;q)_\infty}{\prod_{j=1}^5(q^{N_j}a_jz^\pm;q)_\infty}\frac{dz}{2\pi\ti z}+\operatorname{idem}\big((a_5,N_5);(a_6,N_6)\big)\;.
\end{align*}
Applying the Nasrallah--Rahman identity \cite[Eq.\ (6.4.1)]{gr}
$$\oint\frac{(z^{\pm 2}, B z^{\pm};q)_\infty}{\prod_{j=1}^5(b_jz^{\pm};q)_\infty}\frac{dz}{2\pi\ti z}= \frac{2\prod_{j=1}^5(B/b_j;q)_\infty}{(q;q)_\infty\prod_{1\leq j<k\leq 5}(b_jb_k;q)_\infty}\;,\qquad B=b_1\dotsm  b_5\;, $$
we conclude that
\begin{align*}
L&=2q^{2\binom{N_6}2}a_6^{2N_6}\frac{\prod_{j=1}^4(qa_5^{\pm}/a_j;q)_\infty\prod_{j=1}^5(q^{1-N_6-N_j}/a_ja_6;q)_\infty}{(qa_5^2,a_6a_5^{\pm};q)_\infty\prod_{1\leq j<k\leq 5}(q^{N_j+N_k}a_ja_k;q)_\infty}\\
&\quad\times \,{}_8W_7(a_5^2;a_5a_1,a_5a_2,a_5a_3,a_5a_4,a_5a_6;q,q)
+\operatorname{idem}((a_5,N_5);(a_6,N_6))\;.
\end{align*}
By the non-terminating Jackson summation \cite[Eq.\ (II.25)]{gr}, this can be simplified to the right-hand side of \eqref{awti}.
\end{proof}

If one formally replaces $6$ by $4$ in Theorem \ref{awt}, it is possible to replace the discrete parameters  $N_j$  by generic complex numbers. 
The proof of the corresponding identity is in fact very easy.

\begin{proposition}\label{wps}
For $a_j$ and $b_j$ generic,
\begin{multline}\label{wpsi}\sum_{m=-\infty}^\infty\oint \prod_{j=1}^4\frac{(q^{1+m/2}/a_jz,q^{1-m/2}z/a_j;q)_\infty}{(q^{m/2}b_jz,q^{-m/2}b_j/z;q)_\infty}\frac{(1-q^mz^2)(1-q^mz^{-2})}{q^mz^{4m}}\frac{dz}{2\pi\ti z}\\
=\frac{2(b_1b_2b_3b_4;q)_\infty}{(q/a_1a_2a_3a_4;q)_\infty}\prod_{1\leq j<k\leq 4}\frac{(q/a_ja_k;q)_\infty}{(b_jb_k;q)_\infty}\;. \end{multline}
\end{proposition}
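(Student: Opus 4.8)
The plan is to follow the proof of Theorem \ref{awt} closely, but with the ${}_8\psi_8$ replaced by a ${}_6\psi_6$, which can be summed outright by Bailey's summation instead of being transformed through \cite[(III.38)]{gr}. Let $L$ denote the left-hand side of \eqref{wpsi}. Exactly as in Theorem \ref{awt}, the poles sit at fixed values of $zq^{m/2}$, so I first substitute $z\mapsto zq^{-m/2}$: this makes the denominator factors $(q^{m/2}b_jz;q)_\infty$ and $(q^{-m/2}b_j/z;q)_\infty$ collapse to the $m$-free $(b_jz^{\pm};q)_\infty$, it sends $(1-q^mz^2)\mapsto(1-z^2)$, and it leaves the $m$-dependence in the numerator factors $(q^{1+m}/a_jz;q)_\infty$, $(q^{1-m}z/a_j;q)_\infty$, in $(1-q^{2m}z^{-2})$, and in the monomial $q^{2m^2-m}z^{-4m}$. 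I then interchange the sum and the integral.

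Next I would expose this $m$-dependence using $(q^mc;q)_\infty=(c;q)_\infty/(c;q)_m$, $(q^{-m}c;q)_\infty=(cq^{-m};q)_m(c;q)_\infty$ and $(cq^{-m};q)_m=(q/c;q)_m(-c/q)^mq^{-\binom m2}$, pulling the clean factors $\prod_{j=1}^4(qz^{\pm}/a_j;q)_\infty$ and $(1-z^2)(1-z^{-2})$ out of the sum. What remains is a very-well-poised ${}_6\psi_6$ with base point $a=z^{-2}$, numerator parameters $a_1/z,\dots,a_4/z$, and argument $q/(a_1a_2a_3a_4)=qa^2/(bcde)$, with the factor $(1-q^{2m}z^{-2})$ supplying the very-well-poised pair. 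Applying Bailey's ${}_6\psi_6$ summation \cite[(II.33)]{gr} evaluates the series, and after using $(1-z^{\pm2})(qz^{\pm2};q)_\infty=(z^{\pm2};q)_\infty$ the decisive simplification occurs: Bailey's output carries exactly $\prod_{j=1}^4(qz/a_j,q/a_jz;q)_\infty$ in its denominator, which cancels the factors just pulled out of the numerator. This leaves
\begin{equation*}
L=\frac{(q;q)_\infty\prod_{1\le j<k\le4}(q/a_ja_k;q)_\infty}{(q/a_1a_2a_3a_4;q)_\infty}\oint\frac{(z^{\pm2};q)_\infty}{\prod_{j=1}^4(b_jz^{\pm};q)_\infty}\frac{dz}{2\pi\ti z}.
\end{equation*}

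The last integral is the Askey--Wilson $q$-beta integral (the $b_5\to0$ degeneration of the Nasrallah--Rahman formula \cite[(6.4.1)]{gr}), namely $\oint(z^{\pm2};q)_\infty/\prod_{j=1}^4(b_jz^{\pm};q)_\infty\,dz/2\pi\ti z=2(b_1b_2b_3b_4;q)_\infty/\big((q;q)_\infty\prod_{j<k}(b_jb_k;q)_\infty\big)$. Substituting this and cancelling the $(q;q)_\infty$ gives the right-hand side of \eqref{wpsi} at once. The computation is genuinely short; the only points requiring care are the bookkeeping of the powers of $q$ when identifying the ${}_6\psi_6$, verifying the cancellation of the $(qz^{\pm}/a_j;q)_\infty$ factors, and checking that the contour separates the two families of poles for generic $a_j,b_j$. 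There is no real obstacle here, which is precisely why the four-parameter case is so much easier than Theorem \ref{awt}: a single Bailey sum together with a single $q$-beta integral replaces the chain of \cite[(III.38)]{gr}, the Nasrallah--Rahman integral and Jackson's sum.
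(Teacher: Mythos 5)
Your proposal is correct and follows essentially the same route as the paper's own proof: after substituting $z\mapsto zq^{-m/2}$ and interchanging sum and integral, the $m$-sum is recognized as a very-well-poised ${}_6\psi_6$ with $a=z^{-2}$ and argument $q/a_1a_2a_3a_4$, evaluated by Bailey's summation \cite[Eq.\ (II.33)]{gr}, after which the factors $(qz^{\pm}/a_j;q)_\infty$ cancel and the remaining integral is the Askey--Wilson integral \cite[Eq.\ (6.1.4)]{gr}. The only cosmetic difference is that you obtain the Askey--Wilson evaluation as the $b_5\to 0$ degeneration of the Nasrallah--Rahman formula, where the paper simply cites it directly.
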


\begin{proof}
With $L$ the left-hand side of \eqref{wpsi},
 the identity \eqref{l} is replaced by
\begin{align*} L&=\oint \prod_{j=1}^4\frac{(qz^\pm/a_j;q)_\infty}{(b_jz^\pm;q)_\infty}{(1-z^2)(1-z^{-2})}\\
&\quad\times{}_6\psi_6\left(\begin{matrix}q/z,-q/z,a_1/z,\dots,a_4/z\\ 1/z,-1/z,q/a_1z,\dots,q/a_4z\end{matrix};q,\frac q{a_1\dotsm a_4}\right)\frac{dz}{2\pi\ti z}\;.
\end{align*}
Applying Bailey's summation \cite[Eq.\ (II.33)]{gr} gives
$$L=\frac{(q;q)_\infty\prod_{1\leq j<k\leq 4}(q/a_ja_k;q)_\infty}{(q/a_1a_2a_3a_4;q)_\infty}
\oint\frac{(z^{\pm 2};q)_\infty}{\prod_{j=1}^4(b_jz^\pm;q)_\infty}\frac{dz}{2\pi\ti z}\;, $$
which reduces the result to the Askey--Wilson integral \cite[Eq.\ (6.1.4)]{gr}.
\end{proof}

The next result was obtained in \cite{Gahramanov:2013rda} for $M_j\equiv N_j\equiv 0$ and announced in \cite{Gahramanov:2014ona} in general.

\begin{theorem}\label{t}
Let $a_j$, $b_j$ be generic numbers and $M_j$, $N_j$ integers satisfying $a_1a_2 a_3=b_1b_2 b_3=q^{1/2}$
and $M_1+M_2+M_3=N_1+N_2+N_3=0$. Then,
\begin{multline}\label{p}\sum_{m=-\infty}^\infty\oint \prod_{j=1}^3\frac{(q^{1+m/2}/a_jz,q^{1-m/2}z/b_j;q)_\infty}{(q^{M_j+m/2}a_jz,q^{N_j-m/2}b_j/z;q)_\infty}\frac{(-1)^m}{z^{3m}}\frac{dz}{2\pi\ti z}\\
=\frac{1}{\prod_{j=1}^3q^ {\binom{M_j}2+\binom{N_j}2}a_j^{M_j}b_j^{N_j}}\prod_{j,k=1}^3\frac{(q/a_jb_k;q)_\infty}{(a_jb_kq^{M_j+N_k};q)_\infty}\;
. \end{multline}
\end{theorem}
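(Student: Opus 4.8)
The plan is to adapt the decoupling trick from the proofs of Theorem~\ref{awt} and Proposition~\ref{wps} to the present $U(1)$ setting, and then to confront the feature that distinguishes the $U(1)$ case from the $SU(2)$ ones. Write $L$ for the left-hand side of \eqref{p}. As before, the poles of the integrand occur at values of $zq^{m/2}$ that are independent of $m$: the factors $(q^{M_j+m/2}a_jz;q)_\infty$ contribute poles at $zq^{m/2}=q^{-M_j-k}/a_j$ and the factors $(q^{N_j-m/2}b_j/z;q)_\infty$ contribute poles at $zq^{m/2}=q^{N_j+k}b_j$. I would therefore substitute $z\mapsto zq^{-m/2}$ without moving the contour and interchange the sum with the integral.

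After the substitution the denominators become the $m$-independent products $(q^{M_j}a_jz;q)_\infty$ and $(q^{N_j}b_j/z;q)_\infty$, while the numerators become $(q^{1+m}/a_jz;q)_\infty$ and $(q^{1-m}z/b_j;q)_\infty$. Rewriting these shifted symbols through $(Aq^{m};q)_\infty=(A;q)_\infty/(A;q)_m$ and the reflection formula for $(A;q)_{-m}$, and using the balancing condition $b_1b_2b_3=q^{1/2}$ to cancel the accumulated powers of $q$, $z$ and $(-1)^m$, the sum over $m$ collapses to the single bilateral series
\[
S(z)=\sum_{m=-\infty}^{\infty}\prod_{j=1}^3\frac{(b_j/z;q)_m}{(q/a_jz;q)_m}\,q^m,
\]
a ${}_3\psi_3$ with argument $q$. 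The important structural point is that the integers $M_j,N_j$ enter only through the $m$-independent prefactor $P(z)=\prod_{j=1}^3(q^{M_j}a_jz,q^{N_j}b_j/z;q)_\infty^{-1}$, whereas $S(z)$ is one and the same function for every choice of $M_j,N_j$; thus $L=\oint P(z)S(z)\,\frac{dz}{2\pi\ti z}$.

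The crucial difference from the earlier proofs now becomes visible: the series $S(z)$ is \emph{not} very-well-poised, so --- unlike the ${}_8\psi_8$ and ${}_6\psi_6$ encountered for $SU(2)$ --- it admits no closed-form summation on its own, and the sum and the integral remain genuinely entangled. To disentangle them I would instead carry out the $z$-integration first. For fixed $m$ the integrand (before the substitution) is a ratio of $q$-shifted factorials whose only poles converging to $0$ are $z=q^{N_j-m/2+k}b_j$; closing the contour there and summing the residues of the factors $(q^{N_j-m/2}b_j/z;q)_\infty^{-1}$ turns the integral into a $q$-hypergeometric sum which, under the balancing conditions $a_1a_2a_3=b_1b_2b_3=q^{1/2}$, should collapse by a classical summation ($q$-Gauss, or Bailey type) \cite{gr}. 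What remains is a bilateral sum over $m$ of such products; once this too is summed in closed form, the theta quasi-periodicity \eqref{qp} rewrites the $q^{M_j+N_k}$ shifts as the scalar prefactor $\prod_{j=1}^3 q^{-\binom{M_j}2-\binom{N_j}2}a_j^{-M_j}b_j^{-N_j}$ on the right-hand side of \eqref{p}, and the nine cross-terms assemble into the product over the pairs $(j,k)$.

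The main obstacle is exactly this loss of very-well-poised structure: because $S(z)$ cannot be summed independently, one cannot factor the computation into a summation step followed by a single known integral, as was possible for $SU(2)$. One must instead control a genuine multiple series --- organizing the residue sums over the three towers of poles together with the bilateral sum over $m$ --- and check that the resulting cascade of classical summations telescopes to the compact product on the right. A secondary but real nuisance is the bookkeeping of the discrete shifts $M_j,N_j$, in particular negative values, where \eqref{qp} and the reflection formula for $(A;q)_{-m}$ must be applied with care about signs and powers of $q$.
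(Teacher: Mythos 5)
Your opening diagnosis is correct and matches the paper's strategy at the outset: the substitution $z\mapsto zq^{-m/2}$ does produce an $m$-independent prefactor times the bilateral series $S(z)$ you write down (a ${}_3\psi_3$ with argument $q$), this series is not very-well-poised and admits no closed-form summation, and the right alternative is, exactly as you propose, to shrink the contour and pick up residues at $z=q^{N_j-m/2+k}b_j$ --- this is precisely how the paper's proof begins. The genuine gap lies in what you then claim about the residue sums. Your mechanism --- for fixed $m$, collapse each tower by ``a classical summation ($q$-Gauss, or Bailey type)'', then sum the resulting closed forms over $m$ --- fails on two counts. First, for fixed $m$ the $k$-sum in tower $j$ is not of the claimed form: evaluating $z^{-3m}$ at the pole injects a factor $q^{-3mk}$, so $k$ and $m$ remain entangled, and one must instead reorganize the full double sum over $(m,k)$; carrying this out (the computation behind \eqref{lxy}) shows that each tower contributes a \emph{product} of two ${}_3\phi_2$ series with argument $q$, one with parameters $q^{M_i+N_j}a_ib_j$ and one with parameters $a_ib_j$. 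Second, under the balancing conditions these are \emph{nonterminating Saalsch\"utzian} ${}_3\phi_2$'s (one checks $ef=qabc$), and no single such series has a closed-form evaluation: the nonterminating $q$-Saalsch\"utz summation \cite[Eq.\ (II.24)]{gr} evaluates only a particular linear combination of two of them. Hence there is no ``cascade of classical summations'' that telescopes termwise, and your closing sentence in effect concedes that this step --- the heart of the proof --- is left unresolved.

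The idea missing from your proposal is the differencing/elimination argument by which the paper circumvents the non-summability of the individual ${}_3\phi_2$'s. It packages the three unshifted series into quantities $x_1,x_2,x_3$ (with explicit infinite-product prefactors) and the shifted ones into $y_j=C\tilde x_j$, where $\tilde x_j$ arises from $x_j$ by $a_j\mapsto a_jq^{M_j}$, $b_j\mapsto b_jq^{N_j}$ and $C=\prod_{j=1}^3q^{\binom{M_j}2+2\binom{N_j}2}a_j^{M_j}b_j^{2N_j}$. The nonterminating $q$-Saalsch\"utz summation gives the \emph{difference} $x_2-x_1$ as a closed theta product; the quasi-periodicity \eqref{qp} shows this product transforms under the shifts exactly by the factor absorbed into $C$, so $y_2-y_1=x_2-x_1$, and by symmetry $y_j=x_j+D$ with $D$ independent of $j$. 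The left-hand side of \eqref{p} is then evaluated through the purely algebraic identity
\begin{equation*}
(x_3-x_2)x_1y_1+(x_1-x_3)x_2y_2+(x_2-x_1)x_3y_3=(x_2-x_1)(x_3-x_2)(x_3-x_1)\;,
\end{equation*}
whose right-hand side is a product of the known closed-form differences. Since none of this structure is present, even implicitly, in your sketch, the proposal does not constitute a proof beyond the residue expansion; to complete it along your lines you would need to discover (or reproduce) this combination trick, not a further classical summation.
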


\begin{proof}
This can be proved similarly as the special case treated in
\cite{Gahramanov:2013rda}, so we will be very brief. Shrinking the contour of integration to zero, we  pick up residues at the poles
$$z=q^{k-\frac m2+N_j}b_j\;,\qquad j=1,2,3,\quad k\geq\max(0,m-N_j)\;. $$
Working out the sum of residues explicitly,  the left-hand side
of \eqref{p} can be written
\begin{align}\nonumber L&=\frac{(-1)^{N_1}}{q^{\frac 32\, N_1^2}b_1^{3N_1}}\frac{(qb_1/b_2,qb_1/b_3;q)_\infty}{(q^{N_2-N_1}b_2/b_1,q^{N_3-N_1}b_3/b_1;q)_\infty}\prod_{j=1}^3\frac{(q/a_jb_1;q)_\infty}{(q^{N_1+M_j}a_jb_1;q)_\infty}\\
\nonumber &\quad\times {}_3\phi_2\left(\begin{matrix}q^{M_1+N_1}a_1b_1,q^{M_2+N_1}a_2b_1,q^{M_3+N_1}a_3b_1\\ q^{1+N_1-N_2}b_1/b_2,q^{1+N_1-N_3}b_1/b_3\end{matrix};q,q\right)
{}_3\phi_2\left(\begin{matrix}a_1b_1,a_2b_1,a_3b_1\\ qb_1/b_2,qb_1/b_3\end{matrix};q,q\right)\\
\label{lxy}&\quad+\operatorname{idem}\big((b_1,N_1);(b_2,N_2),(b_3,N_3)\big)\;.
 \end{align}

Let
$$x_1=b_1(qb_1/b_2,qb_1/b_3;q)_\infty\prod_{j=1}^3(a_jb_2,a_jb_3;q)_\infty\,{}_3\phi_2\left(\begin{matrix}a_1b_1,a_2b_1,a_3b_1\\ qb_1/b_2,qb_1/b_3\end{matrix};q,q\right)$$
and let $x_2$ and $x_3$ be defined by the same expression with $b_1$ interchanged by
$b_2$ and $b_3$, respectively. Then, by the nonterminating $q$-Saalsch\"utz summation \cite[Eq.\ (II.24)]{gr},
\begin{equation}\label{xd}x_2-x_1=b_2\theta(b_1/b_2;q)\prod_{j=1}^3\theta(a_jb_3;q)\;. 
\end{equation}
Let $\tilde x_j$ denote the result of replacing
 $a_j$ by $a_jq^{M_j}$ and $b_j$ by $b_jq^{N_j}$ in $x_j$. By \eqref{qp},
under the same change of variables, the right hand side of \eqref{xd}
is  divided by
$C=\prod_{j=1}^3q^{\binom{M_j}2+2\binom{N_j}2}a_j^{M_j}b_j^{2N_j}$. Thus, if we define
$y_j=C\tilde x_j$, then $y_2-y_1=x_2-x_1$. 
 By symmetry,  $y_j=x_j+D$, where $D$ is independent of $j$. It follows that
$$(x_3-x_2)x_1y_1+(x_1-x_3)x_2y_2+(x_2-x_1)x_3y_3=(x_2-x_1)(x_3-x_2)(x_3-x_1)\;. $$
After simplification, this  identity reduces to the desired result.
\end{proof}


We conclude with the following identity. Note that  the
parameter $t$ can be removed by scaling $z\mapsto tz$, but it seems useful to keep it.

 \begin{proposition}\label{ramp}
 For $a$, $b$, $c$, $d$ and $t$ generic parameters and integer $N$, such that
$|q^{\frac{N+1}2}a^{-1}|<|t|<|q^{\frac{N-1}2}b|$,
\begin{multline}\label{rp}\sum_{m=-\infty}^\infty\oint \frac{(q^{1+m/2}/az,q^{1-m/2}z/b;q)_\infty}{(q^{m/2}cz,q^{-m/2}d/z;q)_\infty}\,t^mz^{N-m}\frac{dz}{2\pi\ti z}\\
=t^N\frac{(q/ab,-q^{\frac{1+N}2}ct,-q^{\frac{1-N}2}d/t;q)_\infty}{(cd,-q^{\frac{1+N}2}/at,-q^{\frac{1-N}2}t/b;q)_\infty}\;.\end{multline}
 \end{proposition}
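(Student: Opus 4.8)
<br>

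The plan is to prove Proposition \ref{ramp} by the most direct possible route: shrink the integration contour toward the origin and collect residues, recognizing the resulting sum as a known bilateral summation. First I would examine the integrand of \eqref{rp}. The poles accumulating at $z=0$ come from the factor $(q^{-m/2}d/z;q)_\infty$ in the denominator, located at $z=q^{k-m/2}d$ for $k\geq 0$, while the poles going to infinity come from $(q^{m/2}cz;q)_\infty$, located at $z=q^{-k+m/2}/c$. The analyticity condition $|q^{\frac{N+1}2}a^{-1}|<|t|<|q^{\frac{N-1}2}b|$ is precisely what guarantees convergence once we sum over $m$, so I would keep track of it to justify interchanging the sum over $m$ with the contour collapse.

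The key simplification, exactly as in the proofs of Theorem \ref{awt} and Theorem \ref{t}, is that the poles sit at fixed values of $zq^{m/2}$. So I would substitute $z\mapsto zq^{-m/2}$ to decouple the $m$-dependence from the location of the poles, which turns $t^mz^{N-m}$ into $t^m(zq^{-m/2})^{N-m}$ and lets me interchange the sum and integral. After this substitution the sum over $m$ becomes a geometric-type bilateral series in a single variable, and the integrand's pole structure no longer depends on $m$. I expect the inner sum over $m$ to collapse to an elementary bilateral theta-like series (essentially a ${}_1\psi_1$ or a pure theta sum via the Jacobi triple product), leaving a single contour integral of a ratio of $q$-Pochhammer symbols. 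That integral should then be evaluated by residues at $z=q^kd$, $k\geq0$, producing a ${}_2\phi_1$-type or ${}_1\psi_1$-type series.

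Alternatively, and perhaps more cleanly, I would collapse the contour first: summing the residues at $z=q^{k-m/2}d$ over both $k\geq 0$ and $m\in\mathbb Z$ rewrites the double sum as a bilateral sum whose summand is a product of $q$-Pochhammer symbols times powers of $t$. The natural target is Ramanujan's ${}_1\psi_1$ summation \cite[Eq.\ (II.29) or (5.2.1)]{gr}, which evaluates $\bhyp11{a}{b}{q,z}$ as a ratio of four infinite products — note the proposition's name ``ramp'' and the shape of the right-hand side of \eqref{rp}, a ratio of three Pochhammer symbols over three, strongly suggest exactly this. The main obstacle will be the bookkeeping: correctly combining the two families of poles (or the sum over $m$ with the single integral) into one bilateral index, matching the balancing of exponents of $q$ and $t$ against the convergence strip, and verifying that the four-factor ${}_1\psi_1$ output reorganizes into the stated three-over-three form using $(q/ab;q)_\infty$ and the theta quasi-periodicity \eqref{qp}. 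Once the bilateral sum is correctly assembled, the application of Ramanujan's ${}_1\psi_1$ is immediate and the identity follows after routine simplification.
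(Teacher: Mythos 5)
Your first route is essentially the paper's own proof: after the substitution $z\mapsto zq^{-m/2}$ and interchanging sum and integral, the inner sum over $m$ is a ${}_1\psi_1$ (not a pure theta sum) which is evaluated by Ramanujan's summation \cite[Eq.\ (II.29)]{gr} --- producing the theta factor $\theta(-q^{\frac{N+1}{2}}z/t;q)$ in the kernel --- and the remaining integral is then computed exactly as you suggest, by residues at $z=q^kd$, $k\geq 0$. The only refinements to your outline are that the final residue sum is a ${}_1\phi_0$ evaluated by the $q$-binomial theorem \cite[Eq.\ (II.3)]{gr} rather than a ${}_2\phi_1$- or ${}_1\psi_1$-type series, and that this last step needs the auxiliary restriction $|q^{\frac{1+N}{2}}t/d|<1$, removed afterwards by analytic continuation in $t$.
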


\begin{proof}
Replacing $z$ by $zq^{-m/2}$ and changing the order of summation, we
find that the left-hand side is given by
$$L=\oint \frac{(q/az,qz/b;q)_\infty}{(cz,d/z;q)_\infty}\,z^{N}\,{}_1\psi_1\left(\begin{matrix}b/z\\ q/az\end{matrix};q,-\frac{q^{\frac{1-N}2}t}{b}\right)\frac{dz}{2\pi\ti z}\;. $$
Applying Ramanujan's summation \cite[Eq.\ (II.29)]{gr} gives 
$$L=\frac{(q,q/ab;q)_\infty}{(-q^{\frac{1+N}2}/at,-q^{\frac{1-N}2}t/b;q)_\infty}\oint \frac{\theta(-q^{\frac{N+1}2}z/t;q)}{(cz,d/z;q)_\infty}\,z^{N}\,\frac{dz}{2\pi\ti z} $$
(the restriction on $t$ is needed here for convergence). 
It remains to prove that
\begin{equation}\label{soi}\oint \frac{\theta(-q^{\frac{N+1}2}z/t;q)}{(cz,d/z;q)_\infty}\,z^{N}\,\frac{dz}{2\pi\ti z}=t^N\frac{(-q^{\frac{1+N}2}ct,-q^{\frac{1-N}2}d/t;q)}{(q,cd;q)_\infty}\;.\end{equation}
To  this end, we expand the integral
 as  the sum of residues at the
points $z=q^kd$, $k\geq 0$. By
\cite[Eq.\ (4.10.8)]{gr}, under the additional assumption
  $|q^{\frac{1+N}2}t/d|<1$, the sum of residues
converges and can be computed by the
  $q$-binomial theorem \cite[Eq.\ (II.3)]{gr}.
Since the left-hand side of \eqref{soi} is  analytic in $t$ for
$t\neq 0$, the result holds also without the restriction  $|q^{\frac{1+N}2}t/d|<1$.
\end{proof}

 Using \eqref{absolute}, it is easy to see that the case $a=b=c=d$, $N=0$ is equivalent to the identity proved in Appendix A1 of
\cite{Krattenthaler:2011da}. It may be remarked that our proof of the general case is simpler.

\section{Conclusions}

Similarly to four-dimensional dualities \cite{Spiridonov:2009za,Spiridonov:2011hf}, equality of the superconformal indices for dual theories in three dimensions leads to  new non-trivial integral identities  \cite{Gahramanov:2013rda,Krattenthaler:2011da,Kapustin:2011jm}. We have presented four new identities for basic hypergeometric integrals. More concretely, we studied the generalized superconformal index of confining theories in three dimensions that has the form of a basic hypergeometric integral. This kind of result is important for better understanding  the structure of three-dimensional supersymmetric dualities. Most dualities discussed in the work are known in the literature, but the verification of these dualities using the superconformal index technique is new.

We also presented so-called pentagon identities. They are especially interesting from the geometrical point of view, which interprets the pentagon relation as the $3-2$ Pachner move in the context of the $3d-3d$ correspondence. This relates different  decompositions of a polyhedron with five ideal vertices into ideal tetrahedra.

It would be interesting to study more general $SU(N)$ gauge theories, other gauge groups and other confining theories.

\medskip

{\bf Acknowledgements}.
IG wishes to thank the Chalmers University of Technology for warm hospitality where this work started.  The research of IG is supported in part by the SFB 647 ``Raum-Zeit-Materie. Analytische und Geometrische Strukturen'', the Research Training Group GK 1504 ``Mass, Spectrum, Symmetry'' and the International Max Planck Research School for Geometric Analysis, Gravitation and String Theory. IG is partially supported by an ESF Short Visit Grant 6454 within the framework of the ``Interactions of Low-Dimensional Topology and Geometry with Mathematical Physics (ITGP)'' network. HR is supported by the Swedish Science Research Council. We are particularly grateful to Jonas Pollok for valuable comments on a preliminary version of the paper.

\appendix

\section{A short review of $3d$ ${\mathcal N}=2$ theories}

\label{rs}

The subject is very broad, and  we only discuss  basic facts needed to obtain our results in Section \ref{sis}. We refer the reader to \cite{Seiberg:1996nz, Aharony:1997bx, Intriligator:1996ex} for more details. 

\subsection{Conventions}

The Clifford algebra in $2+1$ dimensions with metric $g_{\mu \nu}$ is
\begin{align}
\{\gamma_\mu, \gamma_\nu \} \ & = \ 2 g_{\mu \nu}, \\
[\gamma_{\mu}, \gamma_{\nu} ] \ & = \ -2 \ti\epsilon^{\mu \nu \lambda} \gamma_\lambda.
\end{align}
As a convenient representation we choose $\gamma^\mu$ as 
\begin{equation}
(\gamma^1)^\alpha_\beta = \ti \sigma_2\;, \;\;\; (\gamma^2)^\alpha_\beta = \sigma_3\;, \;\;\; (\gamma^3)^\alpha_\beta = \sigma_1 \;,
\end{equation}
where $\alpha, \beta$ are spinor indices in the defining representation of $SL(2, \mathbb{R})$. Spinor indices are contracted, raised and lowered with the anti-symmetric matrix 
\begin{equation}
C_{\alpha \beta} \ = \ -C_{\beta \alpha} \ = \ C^{\beta \alpha} \ = \ 
  \left( {\begin{array}{cc}
   0 & -\ti \\       \ti & 0 \      \end{array} } \right).
\end{equation}

\subsection{$\mathcal N=2$ SUSY algebra}

Besides the ordinary generators of the Poincar\'e algebra, the three-dimensional $\mathcal N=2$ SUSY algebra (as for $\mathcal N=1$ SUSY in four dimensions) has four real supercharges. They can be combined into a complex supercharge and its Hermitian conjugate
\begin{equation}
Q_{\alpha} \;\; \text{and} \;\; \bar{Q}_\alpha \;,
\end{equation}
where  $\alpha$ is a spinor index which goes from $1$ to $2(=\mathcal N)$. The part of the $\mathcal N=2$ SUSY algebra involving the supercharges can be written \cite{Aharony:1997bx}
\begin{equation}
\left\{Q_\alpha,{Q}_\beta\right\} = \left\{\bar{Q}_\alpha,\bar{Q}_\beta\right\} =0,
\end{equation}
\begin{equation}
\left\{Q_\alpha,\bar{Q}_\beta\right\}=2 \gamma_{\alpha\beta}^i P_i+2\ti \epsilon_{\alpha\beta}Z,
\end{equation}
where the bosonic generator $P_\mu$ is the momentum generator and $Z$ is a central charge which can be thought of as the reduced component of four-dimensional momentum. The automorphism group of the algebra is $U(1)$ R-symmetry which rotates the supercharges
\begin{equation}
[R,Q_\alpha] \ = \ - Q_\alpha \;.
\end{equation}

Here we are interested in superconformal theories. In this case, we have two additional bosonic generators, special conformal transformations $K_\mu$ and dilatations $D$ and two fermionic generators, $S_\alpha$ and $\bar{S}_\alpha$. The $\mathcal N=2$ superconformal algebra in three dimensions takes the form of the  supergroup \cite{Jafferis:2010un}
\begin{equation}
SO(3,2)_\text{conf} \times SO(2)_R \subseteq OSp(2|4) \;.
\end{equation}
In Euclidean signature it is
\begin{equation}
SO(4,1)_\text{conf} \times SO(2)_R \subseteq OSp(2|2,2) \;.
\end{equation}
The first factor is the conformal group and the second one is the R-symmetry. Note that in the superconformal case the algebra has a distinguished R-symmetry. The important relation of the superconformal algebra for our purposes is
\begin{equation}
\{ \bar{Q}_\alpha, \bar{S}_\beta \} \ = \ M_{\mu \nu} [\gamma^\mu, \gamma^\nu]_{\alpha \beta} + 2 \varepsilon_{\alpha \beta} D  - 2 \varepsilon_{\alpha \beta} R \;.
\end{equation}
In particular, we will use the  commutation relation
\begin{equation}
\{ \bar{Q}_1, \bar{S}_1 \} \ = 2 \Delta -2 R - 2 j_3 \;.
\end{equation}

\subsection{Multiplets}

The supersymmetry representations of $3d$ ${\mathcal N}=2$ theories are closely related to the representations of $4d$ ${\mathcal N}=1$ theories and can be directly obtained from these by dimensional reduction.

To obtain irreducible representations one must impose constraints. In order to do so it is useful to define supercovariant derivatives:
\begin{align}
D_\alpha & = \frac{\partial}{\partial \theta_\alpha} - \ti(\gamma^\mu \bar{\theta})_\alpha \partial_\mu\;, \\
\bar{D}_\alpha & = \frac{\partial}{\partial \bar{\theta}_\alpha} - \ti(\gamma^\mu {\theta})_\alpha \partial_\mu\;.
\end{align}

The simplest type of superfield is a chiral multiplet $\Phi$. It satisfies the constraint
\begin{equation}
\bar{D}_\alpha \Phi \ = \ 0 \;.
\end{equation}
It can be expanded as 
\begin{equation}
\Phi \ = \ \phi(y) +\sqrt{2} \theta \psi (y) +\theta^2 F (y) \;,
\end{equation}
where $\phi$ is a complex scalar field, $\psi$ is a complex Dirac fermion, $F$ is an auxiliary complex scalar,
 $\theta$ is a Grassmann coordinate and $y^\mu = x^\mu +\ti\theta \sigma^\mu \bar{\theta}$.

The vector multiplet consists of a real scalar field $\sigma$, a vector field $A_\mu$, a complex Dirac fermion $\lambda$ and a real auxiliary scalar field $D$.
Its expansion in Wess-Zumino gauge is given by
\begin{equation}
V \ = \ -\theta \sigma^\mu \bar{\theta} A_\mu (x) -\theta \bar{\theta} \sigma + \ti \theta \theta \bar{\theta} \bar{\lambda} (x) -\ti \bar{\theta} \bar{\theta} \theta \lambda (x) +\frac12 \theta \theta \bar{\theta} \bar{\theta} D(x) \;,
\end{equation}
The appearance of a real scalar field $\sigma$ is due to the component of the four-dimensional vector field in the reduced direction.

\section{Details of Example 1}

All contributions to the superconformal indices in Example 1 are as follows:

\begin{itemize}

\item Contribution of the chiral multiplets
\begin{align} 
\text{ind}_\Phi = \left\{\begin{array}{lc} \left[ z a_i \frac{q^{\frac{|n_i+m|}{2}}}{1-q}-z^{-1} a_i^{-1} \frac{q^{1+\frac{|n_i+m|}{2}}}{1-q} \right] + \left[ z^{-1} a_i \frac{q^{\frac{|n_i-m|}{2}}}{1-q}-z a_i^{-1} \frac{q^{1+\frac{|n_j-m|}{2}}}{1-q} \right] & \text{: Theory A}, \\
\left[ a_i a_j \frac{q^{\frac{|n_i+n_j|}{2}}}{1-q}-a_i^{-1} a_j^{-1} \frac{q^{1+\frac{|n_i+n_j|}{2}}}{1-q} \right]  & \text{: Theory B}.
\end{array}\right. 
\end{align}

\item Contribution of the vector multiplet
\begin{align} 
\text{ind}_{gauge} = \left\{\begin{array}{lc} - q^{\frac12 |m_i|} z^2 - q^{\frac12 |m_i|} z^{-2} & \text{: Theory A}, \\
\text{no vector multiplet} & \text{: Theory B}.
\end{array}\right.
\end{align}

\item Other contributions
\begin{align} 
q_{0}(m) =  \left\{\begin{array}{lc} {-\frac{|n_j+m|}{2}-\frac{|n_j-m|}{2}}  & \text{: Theory A}, \\
 -\frac{|n_i+n_j|}{2} & \text{: Theory B}.
\end{array}\right.
\end{align}

\begin{align} 
e^{\ti b_0} =  \left\{\begin{array}{lc} z^{-\sum_{i=1}^6 (\frac{|n_i+m|}{2}-\frac{|n_i-m|}{2})}  & \text{: Theory A}, \\
0 & \text{: Theory B}.
\end{array}\right.
\end{align}

\end{itemize}

\end{document}